\documentclass[aps,prl,twocolumn,superscriptaddress]{revtex4-2}

\usepackage{amsmath,amssymb,amsthm}
\usepackage{graphicx}
\usepackage{xcolor}
\usepackage{hyperref}
\usepackage{braket}
\usepackage{mathtools}
\usepackage{tikz}
\usepackage{pgfplots}
\pgfplotsset{compat=1.18}
\usetikzlibrary{shapes,shapes.symbols,arrows.meta,positioning,calc}

\newtheorem{theorem}{Theorem}
\newtheorem{lemma}{Lemma}
\newtheorem{corollary}{Corollary}
\newtheorem{remark}{Remark}

\newcommand{\symp}{\Omega}
\newcommand{\bvec}[1]{\boldsymbol{#1}}

\begin{document}


\title{Symplectic Barnes-Wall GKP Codes:\\Deterministic $O(N \log^2 N)$ Decoding and Logarithmic Rate Scaling}

\author{Shanxiang Lyu}
\email{lsx07@jnu.edu.cn}
\affiliation{College of Cyber Security, Jinan University, Guangzhou, China}

\date{\today}

\begin{abstract}
We construct an explicit symplectic realization of the Barnes-Wall lattice that yields a family of multimode Gottesman-Kitaev-Preskill (GKP) codes with encoding rate $R = \frac{1}{2}\log_2 N$ and a deterministic $O(N\log^2 N)$ bounded-distance decoder. The recursive generator $G_{m+1} = \bigl(\begin{smallmatrix} G_m & 0 \\ G_m & R_m G_m \end{smallmatrix}\bigr)$ with $R_m = I + \Omega$ simultaneously guarantees symplectic integrality for valid quantum stabilizers and preserves the exact Barnes-Wall decoding structure through a chain of isometric isomorphisms. The code distance is constant at $\Delta^2 = 1$ (in units of $2\pi$), representing an explicit distance--rate tradeoff in which logarithmic encoding efficiency is achieved at the cost of non-scaling protection. This construction provides a deterministic, space-efficient paradigm for GKP error correction in platforms supporting non-local modular connectivity.
\end{abstract}

\maketitle


Quantum error correction with continuous-variable (CV) systems~\cite{gottesman2001,terhal2016,albert2020,campagne2020} encodes logical information into the phase space of bosonic modes. The Gottesman-Kitaev-Preskill (GKP) code~\cite{gottesman2001} defines stabilizers as displacement operators corresponding to a lattice $\Lambda \subset \mathbb{R}^{2N}$. Designing practical, high-dimensional GKP block codes faces three intertwined challenges. First, existing high-rate GKP constructions generally exhibit a tradeoff between encoding rate and distance: codes with growing distance~\cite{conrad2024,bloemer2026} maintain constant rate, while high-rate constructions typically sacrifice distance. Second, dense classical lattices rarely satisfy the symplectic commutativity conditions required for valid quantum stabilizers~\cite{conrad2022}. Third, maximum-likelihood decoding on unstructured lattices requires solving the closest vector problem (CVP), which is computationally expensive for general cases~\cite{agrell2002} and conjectured hard for worst-case lattices~\cite{micciancio2001}.

Multimode GKP encoding has been pursued along several complementary directions. Noh, Girvin, and Jiang~\cite{noh2020} proposed encoding a single oscillator into many oscillators via two-mode squeezing; Royer, Singh, and Girvin~\cite{royer2022} developed multimode grid states and analyzed their stabilizer structure. Lin, Chamberland, and Noh~\cite{lin2023} studied closest-lattice-point decoding for general multimode GKP codes. Conrad, Burchards, and Flammia~\cite{conrad2024lattices} placed GKP codes in a unified lattice-theoretic framework connecting algebraic geometry and fault tolerance. Randomized constructions based on NTRU and module-SIS lattices~\cite{conrad2024,bloemer2026} achieve asymptotically good distances but rely on trapdoor or heuristic decoders whose effective decoding radii shrink with dimension. In contrast, we focus on a deterministic structured lattice---Barnes-Wall---that natively supports efficient bounded-distance decoding.


\begin{figure}[htbp]
\centering
\textbf{(a) Error Correction Cycle}\par
\vspace{0.1cm}
\resizebox{0.95\columnwidth}{!}{%
\begin{tikzpicture}[
    wire/.style={draw=black!85, thick},
    cwire/.style={draw=black!85, thick, double, double distance=1.2pt},
    tallgate/.style={draw=black!85, fill=white, rectangle, minimum width=1.0cm, minimum height=1.8cm, inner sep=3pt, font=\small},
    smallgate/.style={draw=black!85, fill=white, rectangle, minimum width=0.8cm, minimum height=0.6cm, inner sep=2pt, font=\small},
    meas/.style={draw=black!85, fill=white, signal, signal to=east, signal from=nowhere, minimum width=0.6cm, minimum height=0.5cm, inner sep=2pt}
]
\draw[wire] (0, 1.2) -- (8.5, 1.2);
\draw[wire] (0, 0) -- (5.0, 0);
\draw[thick, black!85] (0.2, 1.1) -- (0.4, 1.3);
\draw[thick, black!85] (0.2, -0.1) -- (0.4, 0.1);
\node[left, font=\small] at (0, 1.2) {Logicals $|\psi\rangle$};
\node[left, font=\small] at (0, 0) {Ancillae $|\mathrm{GKP}\rangle$};
\node[tallgate] at (1.2, 0.6) {$U_{G_m}$};
\node[tallgate, text=red!80!black, draw=red!80!black, fill=red!5, dashed] at (2.8, 0.6) {Noise $\mathcal{N}$};
\node[tallgate] at (4.4, 0.6) {$U_{G_m}^\dagger$};
\node[meas] (m) at (5.3, 0) {$\bvec{t}$};
\draw[cwire] (m.east) -- (6.2, 0);
\node[smallgate, fill=gray!15, align=center] (bdd) at (7.0, 0) {Exact BDD \\ $O(N\log^2 N)$};
\node[circle, fill=black, inner sep=1.2pt] (dot) at (8.0, 0) {};
\draw[cwire] (bdd.east) -- (dot.center);
\draw[cwire, -{Latex[length=2mm]}] (dot.center) -- (8.0, 0.9);
\node[smallgate] (disp) at (8.0, 1.2) {$D(\hat{\bvec{z}})$};
\draw[wire] (disp.east) -- (9.0, 1.2);
\end{tikzpicture}%
}
\vspace{0.4cm}
\textbf{(b) SBW-GKP Architecture ($m=4$, $N=8$)}\par
\vspace{0.1cm}
\resizebox{0.95\columnwidth}{!}{%
\begin{tikzpicture}[
    x=1.6cm, y=0.7cm,
    mode/.style={draw=black!70, thick},
    gate/.style={draw=blue!80!black, fill=blue!5, thick, rectangle, minimum width=0.8cm, minimum height=0.55cm, rounded corners=2pt, font=\small},
    ctrl/.style={draw=red!80!black, fill=red!80!black, circle, inner sep=1.2pt},
    targ/.style={draw=red!80!black, thick, circle, inner sep=2pt, path picture={\draw[thick, red!80!black] (path picture bounding box.north) -- (path picture bounding box.south) (path picture bounding box.east) -- (path picture bounding box.west);}},
    link/.style={draw=red!80!black, thick}
]
\foreach \i in {1,...,8} {
    \draw[mode] (0,-\i) -- (6,-\i);
    \node[left, font=\scriptsize] at (0,-\i) {Mode \i};
}
\node[above, font=\bfseries\scriptsize, text=black!80] at (1, -0.3) {Level $m=2$};
\foreach \i in {1,3,5,7} {
    \pgfmathtruncatemacro{\j}{\i+1}
    \draw[link] (1,-\i) -- (1,-\j);
    \node[ctrl] at (1,-\i) {}; \node[targ] at (1,-\j) {};
    \node[gate, fill=green!5, draw=green!60!black] at (1.5, -\i-0.5) {$R_1 G_1$};
}
\node[above, font=\bfseries\scriptsize, text=black!80] at (3, -0.3) {Level $m=3$};
\foreach \i in {1,2,5,6} {
    \pgfmathtruncatemacro{\j}{\i+2}
    \draw[link] (3,-\i) -- (3,-\j);
    \node[ctrl] at (3,-\i) {}; \node[targ] at (3,-\j) {};
}
\node[gate, minimum height=1.25cm] at (3.5, -2.5) {$R_2 G_2$};
\node[gate, minimum height=1.25cm] at (3.5, -6.5) {$R_2 G_2$};
\node[above, font=\bfseries\scriptsize, text=black!80] at (5, -0.3) {Level $m=4$};
\foreach \i in {1,2,3,4} {
    \pgfmathtruncatemacro{\j}{\i+4}
    \draw[link] (5,-\i) -- (5,-\j);
    \node[ctrl] at (5,-\i) {}; \node[targ] at (5,-\j) {};
}
\node[gate, minimum height=2.65cm, fill=blue!10] at (5.5, -6.5) {$R_3 G_3$};
\draw[dashed, thick, gray, rounded corners] (4.7, -4.5) rectangle (6.2, -8.5);
\node[right, font=\scriptsize, text=gray, align=left] at (6.3, -6.5) {Global\\Symplectic\\Scrambling};
\end{tikzpicture}%
}
\caption{Operational cycle and hardware architecture. (a) Logical modes and GKP ancillae are globally entangled via $U_{G_m}$; localized burst noise is scattered by $U_{G_m}^\dagger$ into diffuse background syndrome $\bvec{t}$. A deterministic $O(N\log^2 N)$ classical bounded-distance decoder computes the optimal recovery displacement $D(\hat{\bvec{z}})$. (b) The generation matrix $G_m$ operates via a hierarchical butterfly network of symplectic entangling gates ($R_m G_m$), maximally scrambling phase space to pack $R = \frac{1}{2}\log_2 N$ logical qubits.}
\label{fig:architecture_and_cycle}
\end{figure}

Consider $N$ bosonic modes with canonical operators $\bvec{\hat{r}} = (\hat{q}_1, \hat{p}_1, \dots, \hat{q}_N, \hat{p}_N)^T$ satisfying $[\hat{r}_j, \hat{r}_k] = i \symp_{jk}$, where we adopt the interleaved symplectic convention $\symp_{2N} = \bigoplus_{j=1}^N \begin{pmatrix} 0 & 1 \\ -1 & 0 \end{pmatrix}$ throughout---this choice is essential for the block-diagonal structure used in the recursive proofs below. A valid GKP stabilizer lattice $\Lambda$ with generator matrix $M$ must satisfy $M^T \symp M = 2\pi K$, where $K \in \mathbb{Z}^{2N \times 2N}$ is integral and antisymmetric~\cite{conrad2022}. The code distance is $\Delta^2 = \min_{\bvec{x} \in \Lambda^\perp \setminus \Lambda} \|\bvec{x}\|^2$, where $\Lambda^\perp$ is the symplectic dual lattice. The operational cycle [Fig.~\ref{fig:architecture_and_cycle}(a)] proceeds by globally entangling logical modes with GKP ancillae, extracting the syndrome, and applying a deterministic recovery displacement computed classically.


While classic Barnes-Wall lattices provide bounded-distance decoding structures~\cite{barnes1959,micciancio2008}, standard real generators do not natively satisfy symplectic integrality. We introduce a matrix-based block recursion over $\mathbb{R}^{2N}$ to resolve this; its hierarchical butterfly structure is realized physically as a network of symplectic entangling gates [Fig.~\ref{fig:architecture_and_cycle}(b)]. The identification with the classical Barnes-Wall lattice is established in the Supplemental Material~\cite{SM} through a unimodular equivalence over $\mathbb{Z}[i]$; throughout this Letter, the term ``Barnes-Wall lattice'' refers to the real lattice obtained from the standard Gaussian-integer Barnes-Wall lattice through this isometric identification.

We recursively define the symplectic Barnes-Wall generator. Let $m \geq 1$ index the recursion level, acting on $N = 2^{m-1}$ modes ($2^m$ phase-space dimensions):
\begin{align}
    G_1 &= I_2, \label{eq:g1} \\
    G_{m+1} &= \begin{pmatrix} G_m & 0 \\ G_m & R_m G_m \end{pmatrix}, \label{eq:grec}
\end{align}
where $R_m = I_{2^m} + \symp_{2^m}$ satisfies $R_m^T R_m = 2 I_{2^m}$. Note that $R_m/\sqrt{2}$ is simultaneously orthogonal and symplectic with unit determinant; $R_m$ itself is a $\sqrt{2}$-scaled orthogonal symplectic matrix with $\det(R_m) = 2^{2^{m-1}}$.

\begin{theorem}[Symplectic Integrality]\label{thm:main}
For any $m \geq 1$, the generator $G_m$ is invertible, and the normalized overlap matrix
\begin{equation}
K_m := G_m^T \symp_{2^m} G_m
\end{equation}
is integer-valued, antisymmetric, and non-singular. Consequently, $\tilde{M}_m = \sqrt{2\pi}\, G_m$ defines a valid multimode GKP code.
\end{theorem}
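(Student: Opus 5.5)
The plan is to prove everything by induction on $m$, working directly with the recursion \eqref{eq:grec} and the interleaved convention. The key fact about that convention is that $\symp_{2^{m+1}} = \symp_{2^m} \oplus \symp_{2^m}$ is block-diagonal; this holds because $2^m$ is even for $m \geq 1$. Throughout I write $G = G_m$, $R = R_m$, $\symp = \symp_{2^m}$ and $K = K_m$. The base case $m=1$ is immediate: $G_1 = I_2$ is invertible, and $K_1 = \symp_2$ is integral, antisymmetric and has determinant $1$. I will carry two auxiliary inductive claims alongside the theorem: $G_m$ has integer entries, and $\det G_m \neq 0$.

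\textbf{Step 1: integrality and invertibility of $G_m$.} Since $R_m = I + \symp$ is an integer matrix, integrality of $G_{m+1}$ follows immediately from \eqref{eq:grec}. Because $G_{m+1}$ is block lower-triangular,
\[
\det G_{m+1} = \det(G_m)\,\det(R_m G_m) = (\det G_m)^2 \det R_m .
\]
Using $\det R_m = 2^{2^{m-1}} \neq 0$, which is stated in the text, this is nonzero.

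\textbf{Step 2: two identities for $R$.} From $\symp^2 = -I$ and $\symp^T = -\symp$ I will derive
\[
\symp R = \symp - I, \qquad R^T \symp = \symp + I, \qquad R^T \symp R = 2\symp .
\]
The last identity is just the statement that $R/\sqrt{2}$ is symplectic; explicitly, $(I-\symp)(\symp - I) = 2\symp$.

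\textbf{Step 3: block computation of $K_{m+1}$.} Multiplying out $G_{m+1}^T (\symp \oplus \symp) G_{m+1}$ and substituting the identities from Step 2 should give
\[
K_{m+1} = \begin{pmatrix} 2K & K - G^T G \\ K + G^T G & 2K \end{pmatrix}.
\]
\emph{Integrality.} $K$ is integral by the inductive hypothesis, and $G^T G$ is integral by Step 1, so every block is integral.

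\emph{Antisymmetry.} The diagonal blocks $2K$ are antisymmetric. The off-diagonal blocks satisfy $(K - G^TG)^T = -K - G^TG = -(K + G^TG)$, as required, because $K$ is antisymmetric and $G^TG$ is symmetric. Alternatively, antisymmetry is automatic from the form $G^T\symp G$.

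\emph{Non-singularity.} I will not need the block form for this. Since $\det \symp_{2^{m+1}} = 1$, we have $\det K_{m+1} = (\det G_{m+1})^2$, which is nonzero by Step 1. The same argument shows non-singularity directly at every level $m$.

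\textbf{Step 4: the GKP conclusion.} With $\tilde M_m = \sqrt{2\pi}\,G_m$ we get $\tilde M_m^T \symp \tilde M_m = 2\pi K_m$, where $K_m$ is integral and antisymmetric. This is exactly the stabilizer commutation condition quoted from~\cite{conrad2022}. Non-singularity of $\tilde M_m$ ensures it generates a full-rank lattice, so the stabilizer group is well defined.

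I expect the only delicate point to be the bookkeeping in Step 3: getting the transposes and signs right in the cross terms $G^T \symp R G$ and $G^T R^T \symp G$. In particular, one must use the block-diagonal splitting of $\symp_{2^{m+1}}$, which is exactly where the interleaved convention matters. I would double-check the result against $m=1$, where $K_2$ should come out as an explicit $4\times 4$ integer antisymmetric matrix.
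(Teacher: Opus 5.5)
Your proposal is correct and follows essentially the same route as the paper. Both arguments run an induction on $m$ that uses the block-diagonal splitting $\Omega_{2^{m+1}}=\Omega_{2^m}\oplus\Omega_{2^m}$ and the identities $\Omega R_m=\Omega-I$, $R_m^T\Omega=\Omega+I$, $R_m^T\Omega R_m=2\Omega$ to get the block form of $K_{m+1}$ with $S_m=G_m^TG_m$, and both take non-singularity from the block-triangular determinant of $G_{m+1}$ — your explicit remark that $\det K_{m+1}=(\det G_{m+1})^2$ because $\det\Omega=1$ just spells out a step the paper leaves implicit.
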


\begin{proof}
By induction. For $m=1$, $G_1 = I_2$, yielding $K_1 = \symp_2$, which is integer and antisymmetric.

Assume $K_m$ is integer and antisymmetric. Since $G_1 = I_2$ is integral and $R_m = I + \symp$ has entries in $\{0, \pm 1\}$, the recursion~\eqref{eq:grec} preserves integrality: $G_m \in M_{2^m}(\mathbb{Z})$ for all $m$. Define $S_m = G_m^T G_m$, which is symmetric and integral by the same closure argument.

Evaluating $K_{m+1}$ via block multiplication under the interleaved convention (so that $\symp_{2^{m+1}} = \symp_{2^m} \oplus \symp_{2^m}$):
\begin{align}
K_{m+1} &= \begin{pmatrix} G_m^T & G_m^T \\ 0 & G_m^T R_m^T \end{pmatrix}
           \begin{pmatrix} \symp_{2^m} & 0 \\ 0 & \symp_{2^m} \end{pmatrix}
           \begin{pmatrix} G_m & 0 \\ G_m & R_m G_m \end{pmatrix}.
\end{align}
Computing the four blocks:
\begin{align}
\text{Top-left:}&\quad G_m^T \symp G_m + G_m^T \symp G_m = 2K_m, \nonumber\\
\text{Top-right:}&\quad G_m^T \symp R_m G_m = G_m^T(\symp - I)G_m = K_m - S_m, \nonumber\\
\text{Bottom-left:}&\quad G_m^T R_m^T \symp G_m = G_m^T(\symp + I)G_m = K_m + S_m, \nonumber\\
\text{Bottom-right:}&\quad G_m^T R_m^T \symp R_m G_m = G_m^T (2\symp) G_m = 2K_m, \nonumber
\end{align}
where we used $\symp R_m = \symp(I+\symp) = \symp - I$, $R_m^T \symp = (I-\symp)\symp = \symp + I$, and $R_m^T \symp R_m = 2\symp$, all following from $\symp^2 = -I$, $\symp^T = -\symp$.

Thus:
\begin{equation}\label{eq:k-recursive}
K_{m+1} = \begin{pmatrix} 2K_m & K_m - S_m \\ K_m + S_m & 2K_m \end{pmatrix}.
\end{equation}
Antisymmetry holds because $(K_m - S_m)^T = K_m^T - S_m^T = -K_m - S_m = -(K_m + S_m)$. Integrality follows from closure under addition and multiplication over $\mathbb{Z}$. Non-singularity follows from $\det(G_{m+1}) = (\det G_m)^2 \det(R_m) \neq 0$.

By the Weyl commutation relation, $\hat{D}(\bvec{u})\hat{D}(\bvec{v}) = e^{-i \bvec{u}^T \symp \bvec{v}} \hat{D}(\bvec{v})\hat{D}(\bvec{u})$. Since every entry of $2\pi K_m = \tilde{M}_m^T \symp \tilde{M}_m$ is an integer multiple of $2\pi$, all generator displacements commute, yielding an abelian stabilizer group.
\end{proof}

\begin{corollary}[Encoding Rate and Distance]\label{cor:distance}
The logical Hilbert-space dimension equals $\mathcal{D}_m = \sqrt{\det K_m}$. Substituting the symplectic determinant formula~\cite{conrad2022} yields $\mathcal{D}_m = \det(G_m)$, which evaluates to $\mathcal{D}_m = 2^{k_m}$ with $k_m = (m-1)2^{m-2}$. The code encodes $k_m$ logical qubits into $N = 2^{m-1}$ modes, giving a logarithmic rate $R = k_m/N = \frac{1}{2}\log_2 N$. Leveraging the unimodular lattice equivalence established in the Supplemental Material~\cite{SM}, the code distance is $\Delta^2 = 1$ (in units of $2\pi$). This constant distance reflects an explicit design tradeoff: logarithmic rate growth is achieved at the cost of distance that does not scale with $N$.
\end{corollary}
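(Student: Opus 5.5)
The first part of the corollary follows from Theorem~\ref{thm:main} together with the determinant recursion. By Theorem~\ref{thm:main}, $K_m = G_m^T \symp_{2^m} G_m$ is integral, antisymmetric and non-singular. The standard GKP counting formula~\cite{conrad2022} then gives the logical dimension as $\mathcal{D}_m = \sqrt{\det K_m}$. Since $\det \symp = 1$, we get $\det K_m = (\det G_m)^2$, so $\mathcal{D}_m = |\det G_m|$.

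To evaluate $\det G_m$, I use the block lower-triangular form of~\eqref{eq:grec}:
\begin{equation}
\det G_{m+1} = (\det G_m)^2 \det R_m = (\det G_m)^2\, 2^{2^{m-1}} .
\end{equation}
Setting $d_m = \log_2 \det G_m$, this becomes $d_{m+1} = 2d_m + 2^{m-1}$ with $d_1 = 0$. I would verify by induction that $d_m = (m-1)2^{m-2}$, since $2(m-1)2^{m-2} + 2^{m-1} = m\,2^{m-1}$. Hence $k_m = (m-1)2^{m-2}$. With $N = 2^{m-1}$ we have $\log_2 N = m-1$, which gives $R = k_m/N = (m-1)/2 = \tfrac12\log_2 N$. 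Positivity of $\det G_m$ also follows inductively, because each factor in the recursion is positive.

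For the distance I would prove two matching bounds, working in units where $\tilde M_m = \sqrt{2\pi}\,G_m$ and $\|\cdot\|^2$ is measured in units of $2\pi$.

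\emph{Upper bound.} The symplectic dual is $\Lambda^\perp = \sqrt{2\pi}\,\symp G_m^{-T}\mathbb{Z}^{2^m}$. Because $G_m$ is integral, this dual contains $\sqrt{2\pi}\,\mathbb{Z}^{2^m}$, since $\bvec{e}^T\symp G_m\bvec{a} \in \mathbb{Z}$ for all integer vectors $\bvec{e}$ and $\bvec{a}$. So $\sqrt{2\pi}\,\bvec{e}_1 \in \Lambda^\perp$, and it has normalized norm $1$. It remains to check that $\bvec{e}_1 \notin G_m\mathbb{Z}^{2^m}$ for $m \ge 2$. I would do this through the recursion: a lattice vector has the form $(G_m\bvec{a},\, G_m\bvec{a} + R_m G_m \bvec{b})$. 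If its first half is $\bvec{e}_1$ and its second half is $0$, then $R_m G_m \bvec{b} = -\bvec{e}_1$. This is impossible, because $R_m G_m\bvec{b}$ has even squared norm: $\|R_m\bvec{x}\|^2 = 2\|\bvec{x}\|^2$.

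\emph{Lower bound.} This is the main obstacle. I need every nonzero vector of $\symp G_m^{-T}\mathbb{Z}^{2^m}$ to have squared norm at least $1$. The cleanest route is to show $\Lambda^\perp/\sqrt{2\pi} \subseteq \mathbb{Z}^{2^m}$, which is equivalent to $G_m^{-1}$ being integral. That is false once $\det G_m > 1$, so the argument must use the norm structure instead of integrality.

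What I would try first is a Plotkin-type induction directly on the dual. Inverting~\eqref{eq:grec} gives
\begin{equation}
G_{m+1}^{-T} = \begin{pmatrix} G_m^{-T} & -G_m^{-T}R_m^{-T} \\ 0 & G_m^{-T}R_m^{-T}\end{pmatrix},
\qquad R_m^{-T} = \tfrac12 R_m .
\end{equation}
A dual vector then has the form $(\bvec{u} - \tfrac12 \bvec{w},\, \tfrac12 \bvec{w})$. Here $\bvec{u}$ lies in the level-$m$ dual, and $\bvec{w} = R_m \bvec{v}$ with $\bvec{v}$ in the level-$m$ dual, so $\|\bvec{w}\|^2 = 2\|\bvec{v}\|^2$. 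I would combine the inductive hypothesis (minimum $\ge 1$ at level $m$) with the triangle inequality to bound the norm below. The case split is $\bvec{v} = 0$ versus $\bvec{v} \neq 0$, and the factor $\tfrac12$ makes the second case delicate.

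If this direct estimate loses a constant, I would fall back on the Supplemental Material's unimodular $\mathbb{Z}[i]$-equivalence. That equivalence identifies $G_m\mathbb{Z}^{2^m}$ with the Barnes--Wall lattice, and $\symp G_m^{-T}$ with its dual. The classical self-duality of Barnes--Wall up to scaling by $(1+i)^{m-1}$ would then give the dual's minimal norm exactly, and matching it against the upper bound yields $\Delta^2 = 1$.
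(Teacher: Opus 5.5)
Your rate computation is the paper's: block-triangular determinant, $\det R_m=2^{2^{m-1}}$, and $d_{m+1}=2d_m+2^{m-1}$. Your upper bound $\Delta^2\le 1$ is sound and self-contained, and it supplies a step the paper leaves implicit. Remark~\ref{rem:distance} only evaluates the nonzero dual minimum $c^2\mu^2$; your parity argument is what shows the norm-$1$ dual vectors lie outside $\Lambda$. For the lower bound, your fallback is exactly the paper's route: Theorem~\ref{thm:lattice-eq}, the cited scaled isoduality $\Lambda_{\mathrm{BW}}^*=cQ\Lambda_{\mathrm{BW}}$, and $\mu^2=2^{m-1}$. So the proposal does reach the statement along that path.

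The direct Plotkin induction, however, does not close as you set it up. From the inductive hypothesis $\|\bvec u\|,\|\bvec v\|\ge 1$ and the triangle inequality alone, the case $\bvec v\neq 0$ only yields
\[
\bigl(\|\bvec u\|-\|\bvec v\|/\sqrt2\bigr)^2+\tfrac12\|\bvec v\|^2 .
\]
Minimized over $\|\bvec u\|,\|\bvec v\|\ge1$, this equals $2-\sqrt2<1$, attained at $\|\bvec u\|=\|\bvec v\|=1$ with $\bvec u$ parallel to $R_m\bvec v$. No norm estimate can recover the missing constant; you need lattice structure beyond the inductive minimum.

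The missing idea is that the level-$m$ dual is invariant under $\symp$.
\begin{itemize}
\item By induction $G_m\symp=\symp G_m$. This holds for $G_1=I_2$, and~\eqref{eq:grec} preserves it because $R_m$ commutes with $\symp$ and $\symp_{2^{m+1}}=\symp_{2^m}\oplus\symp_{2^m}$.
\item Transposing, $G_m^{-T}$ commutes with $\symp$. Hence $L_m^*:=G_m^{-T}\mathbb{Z}^{2^m}$ satisfies $\symp L_m^*=L_m^*$, so the normalized symplectic dual is just $L_m^*$, and $R_m^T L_m^*\subseteq L_m^*$. This is the real form of the $\mathbb{Z}[i]$-module structure.
\item Since $R_mR_m^T=2I$, we have $\|\bvec y\|^2=\tfrac12\|R_m^T\bvec y\|^2$, and your dual vector satisfies
\[
\bigl\|\bvec u-\tfrac12 R_m\bvec v\bigr\|^2+\tfrac14\|R_m\bvec v\|^2=\tfrac12\bigl(\|R_m^T\bvec u-\bvec v\|^2+\|\bvec v\|^2\bigr),
\]
where $R_m^T\bvec u-\bvec v$ and $\bvec v$ both lie in $L_m^*$.
\item Three cases close the induction with squared norm $\ge 1$. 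If $\bvec v=0$, the value is $\|\bvec u\|^2$. If $R_m^T\bvec u=\bvec v\neq 0$, the value is again $\|\bvec u\|^2$. If both vectors are nonzero, the value is at least $\tfrac12(1+1)=1$.
\end{itemize}
Relatedly, the off-diagonal block of $G_{m+1}^{-T}$ is $-R_m^{-T}G_m^{-T}$. Your order $-G_m^{-T}R_m^{-T}$ is harmless only because of this same commutation.

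Completed this way, your argument proves $\Delta^2=1$ by a purely real, elementary induction. It needs neither the $\mathbb{Z}[i]$ identification nor the Conway--Sloane isoduality. The paper's route instead buys the identification with $\Lambda_{\mathrm{BW}}$, which it needs anyway for Theorem~\ref{thm:decoder}. Its one-line doubling claim $\mu^2=2^{m-1}$ is itself a Plotkin argument that tacitly rests on the same $\mathbb{Z}[i]$-closure.
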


Unlike finite-dimensional stabilizer codes, the logical dimension of a CV-GKP code is dictated by the phase-space volume of the lattice fundamental cell, allowing the encoding rate to exceed one logical qubit per physical mode ($R > 1$) for $N \ge 8$. This is standard for lattice GKP codes and does not contradict finite-dimensional stabilizer bounds.


To contextualize this tradeoff, we derive the geometric capacity limit for multimode GKP codes.

\begin{theorem}[Minkowski Rate--Distance Bound]\label{thm:minkowski}
For any $N$-mode GKP code with logical dimension $\mathcal{D} = 2^{RN}$, the squared distance is asymptotically bounded by
\begin{equation}
\Delta^2 \lesssim \frac{8}{e}\, N\, 2^{-R}.
\end{equation}
\end{theorem}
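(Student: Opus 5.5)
We prove the bound with Minkowski's theorem applied to the symplectic dual lattice $\Lambda^\perp$, whose minimum is what the distance measures. Take a stabilizer generator $M$ with $M^T\symp M = 2\pi K$. The dual lattice is generated by $M^\perp = \symp M^{-T}$ up to the $2\pi$ normalization, i.e.\ $\Lambda^\perp = \{\bvec{x} : \bvec{x}^T\symp\bvec{v}\in 2\pi\mathbb{Z}\ \forall \bvec{v}\in\Lambda\}$. Its covolume is therefore $\operatorname{vol}(\Lambda^\perp) = (2\pi)^{2N}/\det M$. Using $\det M = (2\pi)^N \sqrt{\det K} = (2\pi)^N \mathcal{D}$, this becomes $\operatorname{vol}(\Lambda^\perp) = (2\pi)^N/\mathcal{D}$, or $\mathcal{D}^{-1}$ in units where lengths are measured in $\sqrt{2\pi}$, matching the convention ``$\Delta^2$ in units of $2\pi$'' used in Corollary~\ref{cor:distance}.

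Since $\Delta^2$ is a minimum over $\Lambda^\perp\setminus\Lambda$, it is bounded above by the full minimum squared norm $\lambda_1(\Lambda^\perp)^2$. It therefore suffices to bound the shortest nonzero vector of a $2N$-dimensional lattice of covolume $\mathcal{D}^{-1}$. (If the shortest dual vector happened to lie in $\Lambda$, we would instead use that $\Delta^2\le$ the minimum over a coset. The simplest route is to note that $\Lambda\subsetneq\Lambda^\perp$, so at least one nontrivial coset exists, and bound the coset minimum by the covering radius of $\Lambda^\perp$ up to constants. Asymptotically only the exponent and the leading constant matter, so I would state the bound through $\lambda_1$ and remark that this step is where the ``$\lesssim$'' absorbs subtleties.)

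Minkowski's convex body theorem gives
\[
\lambda_1^2 \le \frac{4}{V_{2N}^{2/(2N)}}\,\operatorname{vol}(\Lambda^\perp)^{1/N},
\]
where $V_{2N} = \pi^N/N!$ is the volume of the unit ball in dimension $2N$. By Stirling, $V_{2N}^{-1/N} = (N!)^{1/N}/\pi \sim N/(e\pi)$. Multiplying out, $\lambda_1^2 \lesssim \frac{4N}{e\pi}\,\mathcal{D}^{-1/N}$ in absolute units. Converting to units of $2\pi$ multiplies the prefactor by $2\pi$, giving $\frac{8N}{e}\mathcal{D}^{-1/N}$. Substituting $\mathcal{D} = 2^{RN}$ gives $\mathcal{D}^{-1/N} = 2^{-R}$, which yields $\Delta^2 \lesssim \frac{8}{e}N2^{-R}$.

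The main obstacle is the bookkeeping of the $2\pi$ normalizations: the duality pairing, $\det M$ versus $\sqrt{\det K}$, and the unit convention. These must combine to give exactly the constant $8/e$ rather than an off-by-$2\pi$ variant. I would fix the normalization by checking the single-mode square GKP qubit, where $\Delta^2$ should equal $1/2$ in units of $2\pi$, before trusting the general formula.
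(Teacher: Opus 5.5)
Your route is the paper's: Minkowski's bound $\lambda_1^2\le 4(\operatorname{vol}/V_{2N})^{1/N}$ on $\Lambda^\perp$ is exactly the paper's inequality $V_{2N}(\Delta/2)\le V_{\mathrm{cell}}(\Lambda^\perp)$, with the same covolume $(2\pi)^N\mathcal{D}^{-1}$ and the same Stirling step.

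\textbf{The gap: the inequality between $\Delta$ and $\lambda_1$ points the wrong way.} Since $0\in\Lambda$, the set $\Lambda^\perp\setminus\Lambda$ is contained in $\Lambda^\perp\setminus\{0\}$, so its minimum is \emph{larger}: $\Delta\ge\lambda_1(\Lambda^\perp)$. Equality holds exactly when some shortest nonzero vector of $\Lambda^\perp$ is not a stabilizer. An upper bound on $\lambda_1(\Lambda^\perp)$ therefore gives no upper bound on $\Delta$ in general. For example, adjoin a strongly squeezed trivial mode $\sqrt{2\pi}\,\mathrm{diag}(\epsilon,1/\epsilon)\mathbb{Z}^2$ to both $\Lambda$ and $\Lambda^\perp$. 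Then $\lambda_1(\Lambda^\perp)\to 0$ while $\Delta$ is unchanged.

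Your covering-radius fallback does not repair this. The coset you must control is $\bvec{x}+\Lambda$, so the relevant quantity is the covering radius of $\Lambda$, not of $\Lambda^\perp$. That radius is not bounded in terms of $\operatorname{vol}(\Lambda^\perp)$ at all: a squeezed single-mode qubit has $\mathcal{D}=2$ but an arbitrarily large covering radius of $\Lambda$. Closing the gap needs either a hypothesis guaranteeing $\Delta=\lambda_1(\Lambda^\perp)$, or a separate argument for stabilizer lattices that contain short vectors.

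In fairness, the paper's proof has the same hole. It simply identifies $\Delta$ with the minimum nonzero norm of $\Lambda^\perp$. Its balls of radius $\Delta/2$ around points of $\Lambda^\perp$ are disjoint only if $\Delta\le\lambda_1(\Lambda^\perp)$. So as written, both arguments prove the bound for $\lambda_1(\Lambda^\perp)$, not for $\Delta$.

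\textbf{Unit bookkeeping.} This is inverted twice in your write-up.
\begin{itemize}
\item With $\operatorname{vol}=\mathcal{D}^{-1}$ (lengths in units of $\sqrt{2\pi}$), Minkowski gives $\frac{4N}{e\pi}\mathcal{D}^{-1/N}$ in units of $2\pi$, not in absolute units.
\item Multiplying by $2\pi$ converts to absolute units, giving $\frac{8N}{e}\mathcal{D}^{-1/N}$. This is exactly what the paper's computation with $V_{\mathrm{cell}}=(2\pi)^N2^{-RN}$ produces, so the stated constant $8/e$ belongs to absolute units.
\item Read in units of $2\pi$, your final inequality is still true, but loose by a factor of $2\pi$.
\end{itemize}
The single-mode check you proposed would have settled this. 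The square qubit has $\Delta^2=\pi$ in absolute units ($1/2$ in units of $2\pi$). The exact Minkowski value at $N=R=1$ is $8(N!)^{1/N}2^{-R}=4$, which only makes sense as an absolute-unit bound on $\pi$.
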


\begin{proof}
The GKP distance $\Delta$ is the minimum norm of nonzero vectors in the symplectic dual $\Lambda^\perp = 2\pi \symp M^{-T} \mathbb{Z}^{2N}$. Using $M^T \symp M = 2\pi K$ and $\mathcal{D} = \sqrt{\det K} = |\det M|/(2\pi)^N$, the fundamental cell volume is $V_{\mathrm{cell}}(\Lambda^\perp) = (2\pi)^N 2^{-RN}$. By the Minkowski sphere-packing principle~\cite{banaszczyk1993} and foundational GKP capacity arguments~\cite{harrington2001}, non-overlapping $2N$-dimensional Euclidean balls of radius $r = \Delta/2$ cannot exceed the cell volume:
\begin{equation}
V_{2N}(\Delta/2) = \frac{\pi^N (\Delta/2)^{2N}}{\Gamma(N+1)} \le (2\pi)^N 2^{-RN}.
\end{equation}
Canceling $\pi^N$, rearranging, and applying Stirling's approximation asymptotically (using $(N!)^{1/N} \sim N/e$) yields $\Delta^2 \le 8\,(N!)^{1/N}\,2^{-R} \approx \frac{8}{e} N 2^{-R}$.
\end{proof}

This bound reveals the severity of the distance--rate tradeoff. For constant-rate codes ($R = O(1)$), the physical limit permits $\Delta^2 \sim O(N)$. For our code with $R = \frac{1}{2}\log_2 N$, the bound restricts the theoretical maximum to $\Delta_{\max}^2 \sim O(\sqrt{N})$. Our construction deliberately accepts $O(1)$ distance---well below this ceiling---to secure the exact algebraic structure enabling deterministic $O(N \log^2 N)$ decoding, a uniquely tractable operating point compared to heuristic decoders for random lattices~\cite{bloemer2026}.

Topological approaches like the Surface-GKP code~\cite{vuillot2019,fukui2018} represent the prevailing paradigm for near-term CV fault tolerance due to their geometrically local stabilizer generators. However, enforcing 2D locality incurs severe penalties: the encoding rate vanishes as $R \sim O(1/\sqrt{N})$, and topological constraints restrict the achievable distance to $O(\sqrt{N})$ despite the $O(N)$ theoretical ceiling. Moreover, Surface-GKP relies on probabilistic graph-based decoders (e.g., minimum weight perfect matching) that face real-time computational bottlenecks for soft continuous syndromes. These limitations are illustrated in Fig.~\ref{fig:surface_gkp_limitations}: for $d=2$ a localized burst error on adjacent modes instantly forms a fatal logical string [Fig.~\ref{fig:surface_gkp_limitations}(a)], and scaling to $d=3$ requires quadratic spatial expansion while still encoding only one logical qubit [Fig.~\ref{fig:surface_gkp_limitations}(b)].

\begin{figure}[t!]
\centering
\textbf{(a) Surface-GKP $d=2$, $N=8$}\par
\vspace{0.1cm}
\resizebox{0.85\columnwidth}{!}{%
\begin{tikzpicture}[
    x=2cm, y=2cm,
    data/.style={draw=black!80, thick, circle, fill=white, minimum size=0.9cm, font=\bfseries\small},
    zsyn/.style={draw=green!50!black, thick, circle, fill=green!20, minimum size=0.9cm, font=\bfseries\small},
    xsyn/.style={draw=orange!70!black, thick, circle, fill=orange!20, minimum size=0.9cm, font=\bfseries\small},
    edge/.style={draw=black!60, very thick}
]
\fill[green!15] (0,1) -- (1,1) -- (1,0) -- (0,0) -- cycle;
\fill[orange!15] (0,1) -- (1,1) -- (0.5, 1.5) -- cycle;
\fill[orange!15] (0,0) -- (1,0) -- (0.5, -0.5) -- cycle;
\fill[green!15] (0,1) -- (0,0) -- (-0.5, 0.5) -- cycle;
\draw[edge] (0.5,0.5) -- (0,1); \draw[edge] (0.5,0.5) -- (1,1);
\draw[edge] (0.5,0.5) -- (0,0); \draw[edge] (0.5,0.5) -- (1,0);
\draw[edge] (0.5,1.5) -- (0,1); \draw[edge] (0.5,1.5) -- (1,1);
\draw[edge] (0.5,-0.5) -- (0,0); \draw[edge] (0.5,-0.5) -- (1,0);
\draw[edge] (-0.5,0.5) -- (0,1); \draw[edge] (-0.5,0.5) -- (0,0);
\node[data] (M1) at (0,1) {M1}; \node[data] (M2) at (1,1) {M2};
\node[data] (M3) at (0,0) {M3}; \node[data] (M4) at (1,0) {M4};
\node[zsyn, align=center] (M5) at (0.5,0.5) {M5\\$Z$};
\node[xsyn, align=center] (M6) at (0.5,1.5) {M6\\$X$};
\node[xsyn, align=center] (M7) at (0.5,-0.5) {M7\\$X$};
\node[zsyn, align=center] (M8) at (-0.5,0.5) {M8\\$Z$};
\node[align=left, font=\small, text=black!80] at (2.2, 1.1) {\textbf{Rate:} $R = 1/8$};
\node[align=left, font=\small, text=red!80!black] at (2.2, -0.5) {Burst errors on adjacent\\modes form fatal logical strings.};
\draw[red, ultra thick, dashed, -{Latex[length=3mm]}] (-0.4, 1.4) -- (M1.north west) node[midway, above left] {Burst Error};
\draw[red, ultra thick, dashed] (M1.south east) -- (M5.north west);
\end{tikzpicture}%
}

\vspace{0.4cm}
\textbf{(b) Surface-GKP $d=3$, $N=17$}\par
\vspace{0.1cm}
\resizebox{0.9\columnwidth}{!}{%
\begin{tikzpicture}[
    x=1.6cm, y=1.6cm,
    data/.style={draw=black!80, thick, circle, fill=white, minimum size=0.8cm, font=\bfseries\scriptsize},
    zsyn/.style={draw=green!50!black, thick, circle, fill=green!20, minimum size=0.8cm, font=\bfseries\scriptsize},
    xsyn/.style={draw=orange!70!black, thick, circle, fill=orange!20, minimum size=0.8cm, font=\bfseries\scriptsize},
    edge/.style={draw=black!60, very thick}
]
\coordinate (Q1) at (0, 2); \coordinate (Q2) at (2, 2); \coordinate (Q3) at (4, 2);
\coordinate (Q4) at (1, 1); \coordinate (Q5) at (3, 1); \coordinate (Q6) at (5, 1);
\coordinate (Q7) at (2, 0); \coordinate (Q8) at (4, 0); \coordinate (Q9) at (6, 0);
\coordinate (Z1) at (1, 2); \coordinate (Z2) at (3, 2);
\coordinate (X1) at (0.5, 1.5); \coordinate (Z3) at (2, 1); \coordinate (X2) at (4, 1);
\coordinate (X3) at (1.5, 0.5); \coordinate (Z4) at (3.5, 0.5);
\coordinate (X4) at (3, 0);
\fill[orange!15] (Q1) -- (Z1) -- (X1) -- cycle;
\fill[green!15] (Q1) -- (Z1) -- (Q4) -- (X1) -- cycle;
\fill[orange!15] (Z1) -- (Q2) -- (Z3) -- (Q4) -- cycle;
\fill[green!15] (Q2) -- (Z2) -- (Q5) -- (Z3) -- cycle;
\fill[orange!15] (Z2) -- (Q3) -- (X2) -- (Q5) -- cycle;
\fill[green!15] (X1) -- (Q4) -- (X3) -- cycle;
\fill[orange!15] (Q4) -- (Z3) -- (Q7) -- (X3) -- cycle;
\fill[green!15] (Z3) -- (Q5) -- (Z4) -- (Q7) -- cycle;
\fill[orange!15] (Q5) -- (X2) -- (Q6) -- (Z4) -- cycle;
\fill[green!15] (Q7) -- (Z4) -- (Q8) -- (X4) -- cycle;
\fill[orange!15] (Q8) -- (Z4) -- (Q9) -- cycle;
\draw[edge] (Q1) -- (Z1) -- (Q2) -- (Z2) -- (Q3);
\draw[edge] (X1) -- (Q4) -- (Z3) -- (Q5) -- (X2) -- (Q6);
\draw[edge] (X3) -- (Q7) -- (Z4) -- (Q8) -- (X4) -- (Q9);
\draw[edge] (Q1) -- (X1); \draw[edge] (Z1) -- (Q4); \draw[edge] (Q2) -- (Z3); \draw[edge] (Z2) -- (Q5); \draw[edge] (Q3) -- (X2);
\draw[edge] (X1) -- (Q4); \draw[edge] (Q4) -- (X3); \draw[edge] (Z3) -- (Q7); \draw[edge] (Q5) -- (Z4); \draw[edge] (X2) -- (Q8);
\draw[edge] (X3) -- (Q7); \draw[edge] (Q7) -- (X4); \draw[edge] (Z4) -- (Q9);
\node[zsyn] at (Z1) {$Z$}; \node[zsyn] at (Z2) {$Z$};
\node[xsyn] at (X1) {$X$}; \node[zsyn] at (Z3) {$Z$}; \node[xsyn] at (X2) {$X$};
\node[xsyn] at (X3) {$X$}; \node[zsyn] at (Z4) {$Z$}; \node[xsyn] at (X4) {$X$};
\node[data] at (Q1) {D1}; \node[data] at (Q2) {D2}; \node[data] at (Q3) {D3};
\node[data] at (Q4) {D4}; \node[data] at (Q5) {D5}; \node[data] at (Q6) {D6};
\node[data] at (Q7) {D7}; \node[data] at (Q8) {D8}; \node[data] at (Q9) {D9};
\node[align=left, font=\small, text=black!80] at (5.5, 1.8) {\textbf{Modes:} $N=17$};
\node[align=left, font=\small, text=red!80!black] at (5.5, 1.5) {\textbf{Logical:} $\mathbf{k=1}$};
\node[align=left, font=\small, text=black!80] at (5.5, 0.5) {\textbf{Rate:} $1/17 \approx 0.058$};
\end{tikzpicture}%
}
\caption{Limitations of the 2D Surface-GKP code. (a) In a $d=2$ planar layout, a localized burst error (red dashed line) on adjacent modes instantly forms a fatal logical string. (b) Scaling to $d=3$ suppresses errors but requires quadratic expansion ($N=17$) while yielding exactly one logical qubit---illustrating the vanishing rate $R \sim O(1/\sqrt{N})$ inherent to rigid geometric locality.}
\label{fig:surface_gkp_limitations}
\end{figure}

Our SBW-GKP construction abandons geometric locality for globally entangled lattice packings in high-dimensional phase space, yielding three structural advantages. First, the symplectic generator $U_{G_m}$ globally scrambles phase-space volume to exploit the extreme packing density of the Barnes-Wall lattice, achieving $R = \frac{1}{2}\log_2 N$---for instance, $BW_{16}$ packs 12 logical qubits into 8 physical modes ($R=1.5$). Second, the Voronoi decoding cell of the SBW-GKP code is a $2N$-dimensional polytope approximating a hypersphere, whereas the decoding region of $N$ independent GKP codes is a hypercube. When subjected to i.i.d.\ Gaussian displacement noise with variance $\sigma^2$ per quadrature, the total noise vector exhibits norm concentration in the $2N$-dimensional phase space, sharply peaking at $\|\bvec{e}\|^2 \approx 2N\sigma^2$. The global bounded-distance decoder evaluates this total error vector, succeeding deterministically if $\|\bvec{e}\| < \rho \sim \Delta/2 = 1/2$. Consequently, even with a constant code distance $\Delta^2 = 1$, the code guarantees exact correction provided the single-mode noise variance satisfies $\sigma^2 \lesssim 1/(8N)$. Physically, a large error spike in one mode is collaboratively absorbed into the global threshold provided surrounding modes are sufficiently quiet.  Third, the unitary structurally interleaves modes, scattering localized physical crosstalk into independent microscopic components in the dual symplectic space, which the global decoder neutralizes seamlessly.

To make this concrete, consider $N=128$ bosonic modes: a distance-$d \approx 11$ Surface-GKP code encodes one logical qubit; trivially concatenated GKP encodes 128 qubits; at recursion level $m=8$, SBW-GKP encodes $k_8 = 448$ logical qubits from the 256-dimensional geometric packing alone. Moreover, a localized burst error spanning adjacent modes, which forms a fatal logical string in surface codes, is scattered by $U_{G_m}^\dagger$ into low-amplitude background noise resolvable by the deterministic decoder.

The distinct operating points of these families are compared schematically in Fig.~\ref{fig:scaling} and summarized in Table~\ref{tab:comparison}. The rate advantage of SBW-GKP grows logarithmically with $N$ [Fig.~\ref{fig:scaling}(a)], while its bounded-distance decoding radius remains constant [Fig.~\ref{fig:scaling}(b)]---reflecting the distance--rate tradeoff inherited from the Barnes-Wall structure. Table~\ref{tab:comparison} places this tradeoff in context: random lattice constructions saturate the Minkowski capacity limit but their effective decoding radii shrink with dimension, whereas our explicit construction secures a deterministic $O(N\log^2 N)$ decoder at constant distance.


\begin{figure}[htbp]
\centering
\begin{tikzpicture}
\begin{axis}[
    name=plot1,
    width=0.48\columnwidth,
    height=4.8cm,
    xlabel={Physical Modes $N$},
    ylabel={Encoding Rate $k/N$},
    xmin=0, xmax=35,
    ymin=0, ymax=3,
    xtick={8,16,32},
    legend pos=north west,
    legend style={nodes={scale=0.6, transform shape}, draw=black!50},
    title={(a) Rate Scaling}
]
\addplot[blue, thick, mark=square*] coordinates {(2,0.5) (4,1) (8,1.5) (16,2) (32,2.5)};
\addplot[red, thick, dashed] coordinates {(2,1) (4,1) (8,1) (16,1) (32,1)};
\addplot[green!60!black, thick, mark=triangle*] coordinates {(2,0.7) (4,0.5) (8,0.35) (16,0.25) (32,0.17)};
\legend{SBW-GKP, Random M-SIS, Surface-GKP}
\end{axis}
\begin{axis}[
    name=plot2,
    at={(plot1.right of south east)}, anchor=left of south west, xshift=0.3cm,
    width=0.48\columnwidth,
    height=4.8cm,
    xlabel={Physical Modes $N$},
    ylabel={Normalized BDD Radius},
    xmin=0, xmax=35,
    ymin=0, ymax=1.2,
    xtick={8,16,32},
    legend pos=north east,
    legend style={nodes={scale=0.6, transform shape}, draw=black!50},
    title={(b) Decoding Radius}
]
\addplot[blue, thick, mark=square*] coordinates {(2,0.5) (4,0.5) (8,0.5) (16,0.5) (32,0.5)};
\addplot[red, thick, dashed, domain=2:32, samples=50] {0.4*(x/2)^(-1.25)};
\addplot[green!60!black, thick, mark=triangle*] coordinates {(2,0.5) (4,0.59) (8,0.71) (16,0.84) (32,1)};
\legend{SBW-GKP, Random M-SIS, Surface-GKP}
\end{axis}
\end{tikzpicture}
\caption{Theoretical scaling comparison of multimode GKP architectures (schematic). (a) Rate $k/N$ vs.\ $N$: SBW-GKP (blue, $\frac{1}{2}\log_2 N$) achieves logarithmic growth; Random M-SIS (red, $O(1)$) is bounded; Surface-GKP (green, $O(1/\sqrt{N})$) decays. (b) Normalized BDD radius vs.\ $N$: SBW-GKP (blue, $O(1)$) inherits a constant radius from the Barnes-Wall structure; Random M-SIS (red, $O(N^{-5/4})$) follows exact power-law decay~\cite{bloemer2026}; Surface-GKP (green, $O(\sqrt{N})$) benefits from growing surface-code distance.}
\label{fig:scaling}
\end{figure}

\begin{table}[t!]
\centering
\caption{Comparison of multimode GKP code families. The distance $\Delta^2$ is normalized such that $\Delta^2=1$ is the baseline GKP distance. $\Delta_{\max}^2$ denotes the asymptotic Minkowski capacity limit (Theorem~\ref{thm:minkowski}). $^\ast$For random constructions, bounded-distance decoding (BDD) restricts successful recovery to a vanishing fraction of the code distance~\cite{bloemer2026}.}
\label{tab:comparison}
\resizebox{\columnwidth}{!}{%
\begin{tabular}{@{}lcccc@{}}
\hline \hline
Code Family & \textbf{Rate $R$} & \textbf{Theoretical $\Delta_{\max}^2$} & \textbf{Actual $\Delta^2$} & \textbf{Efficient Decoding} \\
\hline
NTRU~\cite{conrad2024} & $O(1)$ & $O(N)$ & $\sim N$ & Trapdoor BDD$^\ast$ \\
M-SIS~\cite{bloemer2026} & $O(1)$ & $O(N)$ & $\sim N$ & Nearly linear$^\ast$ \\
Surface-GKP~\cite{vuillot2019} & $O(1/\sqrt{N})$ & $O(N)$ & $O(\sqrt{N})$ & $\mathrm{poly}(N)$ (MWPM) \\
\textbf{SBW-GKP (this work)} & $\mathbf{\frac{1}{2}\log_2 N}$ & $\mathbf{O(\sqrt{N})}$ & $\mathbf{O(1)}$ & \textbf{Exact BDD} $\mathbf{O(N\log^2 N)}$ \\
\hline \hline
\end{tabular}%
}
\end{table}


Our construction provides the first explicit infinite family of multimode GKP codes that simultaneously achieves $R = \frac{1}{2}\log_2 N$ and inherits an exact $O(N\log^2 N)$ bounded-distance decoder from the Barnes-Wall lattice. The decoder inheritance is rigorously established in the Supplemental Material through a chain of isometries: the complex map $\Phi$ preserves the symplectic structure, the orthogonal matrix $\Omega$ preserves Voronoi cells, and the unitary $Q$ preserves Euclidean decoding regions---yielding a decoder for the symplectic dual $\Lambda_{\mathrm{SBW}}^\perp$ with the same complexity and radius as the primal Barnes-Wall decoder.

The constant distance $\Delta^2 = 1$ reflects a deliberate distance--rate tradeoff, analogous to the capacity-rate tradeoffs in classical coding theory. For Gaussian displacement noise with variance $\sigma^2$, the BDD decoder succeeds when the error vector falls within the decoding radius $\rho \sim \Delta/2$, providing deterministic polynomial-time recovery where heuristic decoders fail. In non-asymptotic regimes ($N \leq 64$), the construction is entirely explicit---because the Barnes-Wall family provides known optimal sphere packings in low dimensions (e.g., Gosset lattice $E_8$ at $N=4$,  laminated lattice $\Lambda_{16}$ at $N=8$), the SBW-GKP code consistently achieves the upper-bound performance of these exceptional lattices without the tail risk of suboptimal random instances.

The practical viability of this approach hinges on hardware platforms supporting non-local modular connectivity, such as programmable photonic meshes with full unitary synthesis~\cite{yamasaki2020} or superconducting circuits with long-range couplers. In such architectures, the hierarchical butterfly network (Fig.~\ref{fig:architecture_and_cycle}b) maps directly to physical entangling gates, while the deterministic classical decoder eliminates the real-time scheduling bottlenecks that constrain graph-based topological decoders.


\begin{acknowledgments}
We thank J.~Bl\"omer, Y.~Xiao, Z.~Raissi, and S.~Soltan for presenting their work at ISIT~2026, and Prof.\ Hsuan-Yin Lin for introducing Ref.~\cite{conrad2024}. This work was supported by the National Key R\&D Program of China (Grant No.~2026YFE0101400) and the National Natural Science Foundation of China (Grant No.~62571218).
\end{acknowledgments}


\newpage
\clearpage
\setcounter{page}{1}

\setcounter{equation}{0}
\setcounter{figure}{0}
\setcounter{table}{0}
\renewcommand{\theequation}{S\arabic{equation}}
\renewcommand{\thefigure}{S\arabic{figure}}
\renewcommand{\thetable}{S\arabic{table}}

\section*{Supplemental Material:\\
Symplectic Barnes-Wall GKP Codes}

\subsection*{Appendix A: Unimodular Lattice Equivalence}
\label{app:isomorphism}

We prove that our real generator $G_m$ over $\mathbb{Z}$ generates the same lattice as the classical Barnes-Wall generator $B_m$ over the Gaussian integers $\mathbb{Z}[i]$. Our $B_m$ uses the column-generation convention, corresponding to the transpose of Micciancio and Nicolosi's row-generation matrix $BW^{m-1}$~\cite{micciancio2008}. Since both conventions describe the same Barnes-Wall lattice abstractly, we adopt the column convention for compatibility with our recursive structure.

Define the bijective coordinate transformation $\Phi: \mathbb{R}^{2^m} \to \mathbb{C}^{2^{m-1}}$ as $\Phi(\bvec{x})_j = q_j + i p_j$. Restricted to the integers, $\Phi(\mathbb{Z}^{2^m}) = \mathbb{Z}[i]^{2^{m-1}}$.

\begin{lemma}[Symplectic Structure Preservation]\label{lem:symp-preserve}
The map $\Phi$ intertwines the symplectic form $\symp$ with complex multiplication: for all $\bvec{x}, \bvec{y} \in \mathbb{R}^{2^m}$,
\begin{equation}
\bvec{x}^T \symp \bvec{y} = \Im\left[\Phi(\bvec{x})^\dagger \Phi(\bvec{y})\right],
\end{equation}
and consequently $\Phi(\symp \bvec{x}) = -i \Phi(\bvec{x})$.
\end{lemma}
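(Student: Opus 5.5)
Both identities reduce to a computation on a single mode, because the interleaved convention makes $\symp_{2^m} = \bigoplus_{j=1}^{2^{m-1}} \symp_2$ block diagonal and $\Phi$ acts coordinatewise on the pairs $(q_j,p_j)$. I will write $\bvec{x} = (q_1,p_1,\dots)$ and $\bvec{y} = (q_1',p_1',\dots)$. Both sides of the first identity are sums over $j$: the left side because $\symp$ is block diagonal, and the right side because $\Phi(\bvec{x})^\dagger\Phi(\bvec{y}) = \sum_j \overline{\Phi(\bvec{x})_j}\,\Phi(\bvec{y})_j$ and $\Im$ is real-linear. So it suffices to check each block $j$.

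\emph{Single-mode identity.} On one block, $(q,p)\,\symp_2\,(q',p')^T = q p' - p q'$. On the complex side,
\begin{equation*}
\overline{(q+ip)}\,(q'+ip') = (q q' + p p') + i\,(q p' - p q'),
\end{equation*}
whose imaginary part is $q p' - p q'$. Summing over $j$ gives $\bvec{x}^T \symp \bvec{y} = \Im[\Phi(\bvec{x})^\dagger \Phi(\bvec{y})]$.

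\emph{The consequence $\Phi(\symp\bvec{x}) = -i\Phi(\bvec{x})$.} This also holds blockwise. Since $\symp_2 (q,p)^T = (p,-q)^T$, the $j$-th component of $\Phi(\symp\bvec{x})$ is $p_j - i q_j = -i(q_j + i p_j) = -i\,\Phi(\bvec{x})_j$. One could instead derive it from the first identity together with the real part of $\Phi(\bvec{x})^\dagger\Phi(\bvec{y})$, which is the Euclidean inner product. That would determine $\Phi(\symp\bvec{x})$ by pairing against all $\Phi(\bvec{y})$, but the direct componentwise check is simpler and is the one I will use.

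The only real obstacle is sign and ordering conventions. The argument depends on using the interleaved $\symp$ with $+1$ in the upper-right entry of each block, and on $\Phi^\dagger$ conjugating the first argument. With the opposite convention the statement would acquire a sign, so I will state both conventions explicitly at the start of the proof.
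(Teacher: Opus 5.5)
Your proof is correct and matches the paper's: the same single-mode computation $\Im[(q_j - i p_j)(q'_j + i p'_j)] = q_j p'_j - p_j q'_j$ summed over modes, and the same direct check that $\symp(q,p)^T = (p,-q)^T$ maps to $p - iq = -i(q+ip)$. Like you, the paper establishes the second identity by this direct componentwise check, not by deriving it from the first.
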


\begin{proof}
Write $\bvec{x} = (q_1, p_1, \dots, q_N, p_N)^T$ and $\bvec{y} = (q'_1, p'_1, \dots, q'_N, p'_N)^T$, where $N = 2^{m-1}$. By definition $\Phi(\bvec{x})_j = q_j + i p_j$ and $\Phi(\bvec{y})_j = q'_j + i p'_j$. Then
\begin{align}
\Im\left[\Phi(\bvec{x})^\dagger \Phi(\bvec{y})\right]
&= \sum_{j=1}^N \Im\left[(q_j - i p_j)(q'_j + i p'_j)\right] \nonumber \\
&= \sum_{j=1}^N (q_j p'_j - p_j q'_j)
 = \bvec{x}^T \symp \bvec{y}.
\end{align}
For the second claim, note that $\symp$ acts on each mode as $\symp (q, p)^T = (p, -q)^T$, mapping to $p - i q = -i(q + i p)$ under $\Phi$. Extending mode-wise yields $\Phi(\symp \bvec{x}) = -i \Phi(\bvec{x})$.
\end{proof}

By Lemma~\ref{lem:symp-preserve}, matrix multiplication by $\symp$ maps to multiplication by $-i$. The scaled rotation $R_m = I + \symp$ maps precisely to multiplication by the scalar $\phi = 1 - i$.

Applying this transformation, the real recursion for $G_m$ translates to a complexified matrix $\mathcal{G}_m$ over $\mathbb{Z}[i]$:
\begin{equation}
\mathcal{G}_{m+1} = \begin{pmatrix} \mathcal{G}_m & 0 \\ \mathcal{G}_m & (1-i)\mathcal{G}_m \end{pmatrix}, \quad \mathcal{G}_1 = 1.
\end{equation}

The classical Barnes-Wall generator follows
\begin{equation}
B_{m+1} = \begin{pmatrix} B_m & 0 \\ B_m & (1+i)B_m \end{pmatrix}, \quad B_1 = 1.
\end{equation}

\begin{theorem}[Exact Lattice Equality]\label{thm:lattice-eq}
For all $m \geq 1$, there exists a unimodular matrix $U_m \in \mathrm{GL}_{2^{m-1}}(\mathbb{Z}[i])$ such that $\mathcal{G}_m = B_m U_m$. Consequently, $\Lambda_{\mathrm{SBW}}$ and $\Lambda_{\mathrm{BW}}$ define the same lattice over $\mathbb{Z}[i]$.
\end{theorem}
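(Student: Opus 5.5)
The proof will be a direct induction on $m$. The key observation is that the two recursions differ only by the scalar on the bottom-right block, and $1-i = -i(1+i)$, where $-i$ is a unit in $\mathbb{Z}[i]$. This suggests the explicit choice $U_1 = 1$ and
\begin{equation}
U_{m+1} = \begin{pmatrix} U_m & 0 \\ 0 & -i\,U_m \end{pmatrix}.
\end{equation}
By induction, $U_m$ is then a diagonal matrix whose entries are powers of $-i$. Such a matrix is trivially in $\mathrm{GL}_{2^{m-1}}(\mathbb{Z}[i])$: its determinant is a power of $-i$, hence a unit, and its inverse is the diagonal of conjugates.

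For the inductive step, assume $\mathcal{G}_m = B_m U_m$ and compute $B_{m+1} U_{m+1}$ blockwise. The top-left block is $B_m U_m = \mathcal{G}_m$ and the top-right block is $0$. The bottom-left block is $B_m U_m = \mathcal{G}_m$. The bottom-right block is $(1+i)B_m(-iU_m) = -i(1+i)\,B_m U_m = (1-i)\mathcal{G}_m$. Here the scalar $-i(1+i)$ is central and commutes past $B_m$. This reproduces exactly the recursion for $\mathcal{G}_{m+1}$, so the base case $\mathcal{G}_1 = B_1 = 1$ closes the induction. Since $U_m$ is invertible over $\mathbb{Z}[i]$, the $\mathbb{Z}[i]$-column spans coincide: $\mathcal{G}_m\mathbb{Z}[i]^{N} = B_m U_m \mathbb{Z}[i]^{N} = B_m\mathbb{Z}[i]^{N}$.

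The step needing the most care is justifying that the real lattice $G_m\mathbb{Z}^{2^m}$ corresponds under $\Phi$ to the $\mathbb{Z}[i]$-module $\mathcal{G}_m\mathbb{Z}[i]^{N}$. In other words, we must check that $\mathcal{G}_m$ really is the complexification of $G_m$, not just a formal translation of the recursion. I would prove, again by induction, that every $2\times 2$ mode block of $G_m$ has the form $aI_2 + b\symp_2$ with $a,b\in\mathbb{Z}$. This holds because $G_1 = I_2$, and the recursion only uses blocks built from $I$ and $R_m = I+\symp$, a class closed under sums and products. Consequently $G_m$ commutes with $\symp_{2^m}$. By Lemma~\ref{lem:symp-preserve}, each such block acts under $\Phi$ as multiplication by $a - ib$. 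This gives $\Phi(G_m\bvec{x}) = \mathcal{G}_m\Phi(\bvec{x})$ for all $\bvec{x}$, with $R_m \mapsto 1-i$ as noted in the text.

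Finally, since $\Phi(\mathbb{Z}^{2^m}) = \mathbb{Z}[i]^{N}$, we obtain $\Phi(G_m\mathbb{Z}^{2^m}) = \mathcal{G}_m\mathbb{Z}[i]^N = B_m\mathbb{Z}[i]^N$. Hence $\Lambda_{\mathrm{SBW}}$ and $\Lambda_{\mathrm{BW}}$ are the same lattice.
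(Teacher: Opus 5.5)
Your proof is correct and follows the paper's argument. You use the same induction with $U_{m+1}=\mathrm{diag}(U_m,-iU_m)$, driven by the identity $1-i=-i(1+i)$, and you see unimodularity directly from $U_m$ being diagonal with unit entries, which is slightly more transparent than the paper's determinant formula. Your additional check that every mode block of $G_m$ has the form $aI_2+b\Omega_2$, so that $\Phi(G_m\bvec{x})=\mathcal{G}_m\Phi(\bvec{x})$, makes rigorous the real-to-complex translation that the paper only asserts informally just before the theorem.
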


\begin{proof}
By induction. For $m=1$, $\mathcal{G}_1 = B_1 = 1$, thus $U_1 = 1$ is unimodular.

Assume $\mathcal{G}_m = B_m U_m$ for a unimodular $U_m$. We evaluate $\mathcal{G}_{m+1}$:
\begin{align}
\mathcal{G}_{m+1}
&= \begin{pmatrix} B_m U_m & 0 \\ B_m U_m & (1-i)B_m U_m \end{pmatrix} \nonumber \\
&= \begin{pmatrix} B_m & 0 \\ B_m & (1+i)B_m \end{pmatrix}
   \begin{pmatrix} U_m & 0 \\ 0 & -i U_m \end{pmatrix} \nonumber \\
&= B_{m+1} U_{m+1},
\end{align}
where we used the exact algebraic identity $1-i = -i(1+i)$. The block diagonal matrix $U_{m+1} = \mathrm{diag}(U_m, -i U_m)$ operates over the Gaussian integers. Because $-i$ is a fundamental unit in $\mathbb{Z}[i]$, the lower block $-i U_m$ remains within $M_{2^{m-1}}(\mathbb{Z}[i])$. Its determinant
\begin{equation}
\det(U_{m+1}) = (-i)^{2^{m-1}} (\det U_m)^2
\end{equation}
evaluates to a unit in $\mathbb{Z}[i]$, making $U_{m+1}$ unimodular.

Therefore, $\Lambda_{\mathrm{SBW}} = \Lambda_{\mathrm{BW}}$, establishing the exact lattice equivalence. Furthermore, since $\Phi$ is an isometry ($\|\bvec{x}\|_{\mathbb{R}}^2 = \|\Phi(\bvec{x})\|_{\mathbb{C}}^2$), the minimum norm and dual lattice spectrum are preserved.
\end{proof}

\subsection*{Appendix B: Exact BDD on the Symplectic Dual Lattice}
\label{app:bdd}

By Conway and Sloane~\cite{conway1999}, the Barnes-Wall lattice is scaled-isodual. Specifically, its Euclidean dual is given by $\Lambda_{\mathrm{BW}}^* = c Q \Lambda_{\mathrm{BW}}$, where $c = 2^{-(m-1)/2}$ and $Q$ is a unitary matrix.

For GKP codes, the relevant structure is the symplectic dual $\Lambda^\perp = \symp \Lambda^*$. Since $\symp$ is an orthogonal transformation, the symplectic and Euclidean duals are isometrically equivalent.

\begin{remark}[Code Distance and Geometric Scaling]\label{rem:distance}
The mechanism preserving the constant GKP code distance $\Delta_m^2 = 1$ can be rigorously observed through the dual scaling. For the explicit recursive generator $B_m$ (and equivalently our real $G_m$), any lattice vector takes the form
\begin{equation}
\begin{pmatrix} B_{m-1} \bvec{x} \\ B_{m-1} \bvec{x} + (1+i)B_{m-1} \bvec{y} \end{pmatrix}.
\end{equation}
Because $|1+i|^2 = 2$, the squared minimum Euclidean distance strictly doubles at each iteration, yielding $\mu(\Lambda_{\mathrm{BW}})^2 = 2^{m-1}$.

However, the GKP code distance is dictated by the shortest vector in the symplectic dual $\Lambda_{\mathrm{SBW}}^\perp$. By the isometry $\Phi$ and the scaled-isoduality $\Lambda_{\mathrm{BW}}^* = c Q \Lambda_{\mathrm{BW}}$, distances in the dual lattice are multiplied by the scaling factor $c^2 = (2^{-(m-1)/2})^2 = 2^{-(m-1)}$. Consequently, the squared GKP code distance is:
\begin{equation}
\Delta_m^2 = c^2 \mu(\Lambda_{\mathrm{BW}})^2 = 2^{-(m-1)} \times 2^{m-1} = 1.
\end{equation}
This exact geometric cancellation uniquely characterizes our construction: the recursive symplectic structure condenses the lattice packing to encode more logical qubits, while scaling the dual distance down to a perfectly invariant constant.
\end{remark}

\begin{theorem}[Decoder Inheritance]\label{thm:decoder}
Let $D_{\mathrm{BW}}$ be an $O(N \log^2 N)$-time bounded-distance decoder for the primal Barnes-Wall lattice $\Lambda_{\mathrm{BW}}$ with decoding radius $\rho$. Then the composite map
\begin{equation}\label{eq:decoder-map}
D_{\mathrm{SBW}^\perp} := \symp \circ \Phi^{-1} \circ (c Q) \circ D_{\mathrm{BW}} \circ \left(\frac{1}{c}Q^\dagger\right) \circ \Phi \circ \symp^T
\end{equation}
is a bounded-distance decoder for the symplectic dual $\Lambda_{\mathrm{SBW}}^\perp$ with the same decoding radius $\rho$, operating in $O(N \log^2 N)$ time.
\end{theorem}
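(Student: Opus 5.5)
The plan is a conjugation argument. If $T$ is an invertible linear map that sends a lattice $L$ onto $L'$ and satisfies $\|T\bvec{v}\| = \|\bvec{v}\|$, then $T \circ D_L \circ T^{-1}$ is a bounded-distance decoder for $L'$ with the same radius. Indeed, if $\bvec{t} = \bvec{\lambda}' + \bvec{e}$ with $\bvec{\lambda}' \in L'$ and $\|\bvec{e}\| < \rho$, then $T^{-1}\bvec{t} = T^{-1}\bvec{\lambda}' + T^{-1}\bvec{e}$. Here $T^{-1}\bvec{\lambda}' \in L$ and $\|T^{-1}\bvec{e}\| < \rho$, so $D_L$ returns $T^{-1}\bvec{\lambda}'$, and applying $T$ recovers $\bvec{\lambda}'$. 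I would therefore write the map in \eqref{eq:decoder-map} as $T \circ D_{\mathrm{BW}} \circ T^{-1}$ with $T := \symp \circ \Phi^{-1} \circ (cQ)$, where $\Phi^{-1}$ is applied to the complex vector (the notation in \eqref{eq:decoder-map} mixes real and complex coordinates, and I would make this explicit). Then I would check two things: that $T$ maps $\Lambda_{\mathrm{BW}}$ onto $\Lambda_{\mathrm{SBW}}^\perp$, and how $T$ acts on norms.

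For the lattice identity, I would use Theorem~\ref{thm:lattice-eq} and Lemma~\ref{lem:symp-preserve} to identify $\Phi^{-1}(\Lambda_{\mathrm{BW}})$ with $\Lambda_{\mathrm{SBW}}$. Scaled isoduality gives $cQ\Lambda_{\mathrm{BW}} = \Lambda_{\mathrm{BW}}^*$. Because $\Phi$ is a real isometry, it carries the complex dual (taken with respect to $\Re\langle\cdot,\cdot\rangle$, which is the real inner product) to the Euclidean dual, so $\Phi^{-1}(\Lambda_{\mathrm{BW}}^*) = \Lambda_{\mathrm{SBW}}^*$. Finally, Appendix B gives $\symp\Lambda_{\mathrm{SBW}}^* = \Lambda_{\mathrm{SBW}}^\perp$. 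One normalization needs checking: the GKP dual $2\pi\symp M^{-T}\mathbb{Z}^{2N}$ with $M = \sqrt{2\pi}G_m$ equals $\sqrt{2\pi}\,\symp G_m^{-T}\mathbb{Z}^{2N}$. In units of $2\pi$ this is exactly $\symp\Lambda_{\mathrm{SBW}}^*$, so the identity holds.

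The main obstacle is the norm behaviour of $T$. The maps $\symp$, $\Phi$ and $Q$ are isometries, but the factor $c$ is not, since $\|T\bvec{v}\| = c\|\bvec{v}\|$. A received word within distance $\rho$ of $\Lambda_{\mathrm{SBW}}^\perp$ is therefore pulled back to one within $\rho/c$ of $\Lambda_{\mathrm{BW}}$. The correct conclusion is that the composite decodes $\Lambda_{\mathrm{SBW}}^\perp$ with radius $c\rho$. The radius is "the same" only relative to the lattice minimum: $\rho$ is a fixed fraction of $\mu(\Lambda_{\mathrm{BW}}) = 2^{(m-1)/2}$, so $c\rho$ is the same fraction of $\Delta = c\,\mu = 1$, as in Remark~\ref{rem:distance}. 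I would state the theorem in this normalized sense, giving relative radius $\rho/\mu$ and absolute radius $c\rho$, and flag that a literal "same $\rho$" is false unless $\rho$ is measured in units of the minimum distance.

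Complexity is routine. Applying $\symp$ and $\Phi$ or their inverses costs $O(N)$, and scaling by $c$ costs $O(N)$. For $Q$ I would use the explicit form from Conway–Sloane: the isoduality of $BW$ is realized by a Hadamard-type butterfly (tensor powers of a $2\times 2$ Gaussian-integer matrix, up to scaling), so $Q$ and $Q^\dagger$ can be applied in $O(N\log N)$ by the fast Walsh–Hadamard recursion. Adding the $O(N\log^2 N)$ call to $D_{\mathrm{BW}}$ gives $O(N\log^2 N)$ in total. The remaining risk is confirming that $Q$ admits such a fast factorization rather than merely being some unitary; if only existence of $Q$ were available, applying a dense unitary would cost $O(N^2)$ and break the claimed bound.
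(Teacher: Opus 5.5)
Your conjugation argument is the paper's proof read as a pullback. The chain \eqref{eq:sp1}--\eqref{eq:sp5} says exactly that $d(\bvec{\tau},\Lambda_{\mathrm{SBW}}^\perp)=c\,d(T^{-1}\bvec{\tau},\Lambda_{\mathrm{BW}})$ for your $T=\symp\circ\Phi^{-1}\circ(cQ)$. Your lattice-identity and normalization checks cover the same ground as the paper's use of $\Phi(\Lambda_{\mathrm{SBW}}^*)=\Lambda_{\mathrm{BW}}^*$ and $\Lambda^\perp=\symp\Lambda^*$, and slightly sharpen it via the $\Re\langle\cdot,\cdot\rangle$ remark.

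Your radius correction is right, and the paper's proof is wrong at that point. It concludes that \eqref{eq:sp1}--\eqref{eq:sp5} are all distance-preserving, yet \eqref{eq:sp5} itself carries the factor $c=2^{-(m-1)/2}$. So the composite decodes $\Lambda_{\mathrm{SBW}}^\perp$ only up to absolute radius $c\rho$, which is strictly smaller than $\rho$ for $m\ge 2$. ``Same $\rho$'' is true only relative to the minimum distance. If $\rho$ is the unique-decoding radius $\mu(\Lambda_{\mathrm{BW}})/2$, then $c\rho=1/2=\Delta/2$, which is what the main text actually uses ($\rho\sim\Delta/2$).

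On complexity, the paper simply lists $Q$ and $Q^\dagger$ among the $O(N)$ operations without justification, so your concern is legitimate. It closes more cheaply than a butterfly, however.
\begin{itemize}
\item Put $W=\bigl(\begin{smallmatrix}1&0\\1&1+i\end{smallmatrix}\bigr)$, so that $B_m=W^{\otimes(m-1)}$.
\item A direct check gives $(1-i)\,W^{-\dagger}=AWB$ with $A=\bigl(\begin{smallmatrix}0&-1\\1&0\end{smallmatrix}\bigr)$ and $B=\bigl(\begin{smallmatrix}0&1\\i&0\end{smallmatrix}\bigr)\in\mathrm{GL}_2(\mathbb{Z}[i])$.
\item Taking tensor powers and absorbing the unimodular factor $B^{\otimes(m-1)}$ gives $\Lambda_{\mathrm{BW}}^*=(1-i)^{-(m-1)}A^{\otimes(m-1)}\Lambda_{\mathrm{BW}}$.
\end{itemize}
Hence $Q$ can be taken as a unit phase times a signed permutation matrix and applied in $O(N)$ time. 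The total stays at $O(N\log^2 N)$, dominated by $D_{\mathrm{BW}}$, with no reliance on an unspecified dense unitary.
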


\begin{proof}
For any target vector $\bvec{\tau} \in \mathbb{R}^{2N}$, decoding on $\Lambda_{\mathrm{SBW}}^\perp$ is equivalent to finding the lattice point $\hat{\bvec{\lambda}} \in \Lambda_{\mathrm{SBW}}^\perp$ that minimizes the Euclidean distance $\|\bvec{\tau} - \bvec{\lambda}\|_{\mathbb{R}}$. By sequentially applying the structural definitions $\Lambda_{\mathrm{SBW}}^\perp = \symp \Lambda_{\mathrm{SBW}}^*$, the isometry $\Phi(\Lambda_{\mathrm{SBW}}^*) = \Lambda_{\mathrm{BW}}^*$, and the isoduality $\Lambda_{\mathrm{BW}}^* = c Q \Lambda_{\mathrm{BW}}$, we have:
\begin{align}
\min_{\bvec{\lambda} \in \Lambda_{\mathrm{SBW}}^\perp} \|\bvec{\tau} - \bvec{\lambda}\|_{\mathbb{R}}
&= \min_{\bvec{u} \in \Lambda_{\mathrm{SBW}}^*} \|\bvec{\tau} - \symp \bvec{u}\|_{\mathbb{R}} \label{eq:sp1} \\
&= \min_{\bvec{u} \in \Lambda_{\mathrm{SBW}}^*} \|\symp^T \bvec{\tau} - \bvec{u}\|_{\mathbb{R}} \label{eq:sp2} \\
&= \min_{\bvec{v} \in \Lambda_{\mathrm{BW}}^*} \|\Phi(\symp^T \bvec{\tau}) - \bvec{v}\|_{\mathbb{C}} \label{eq:sp3} \\
&= \min_{\bvec{x} \in \Lambda_{\mathrm{BW}}} \|\Phi(\symp^T \bvec{\tau}) - c Q \bvec{x}\|_{\mathbb{C}} \label{eq:sp4} \\
&= c \min_{\bvec{x} \in \Lambda_{\mathrm{BW}}} \left\| \frac{1}{c}Q^\dagger \Phi(\symp^T \bvec{\tau}) - \bvec{x} \right\|_{\mathbb{C}}, \label{eq:sp5}
\end{align}
where \eqref{eq:sp2} follows from $\symp^T\symp = I$, \eqref{eq:sp3} follows from Lemma~\ref{lem:symp-preserve} wherein $\Phi$ is an isometry ($\|\bvec{a}\|_{\mathbb{R}} = \|\Phi(\bvec{a})\|_{\mathbb{C}}$), and \eqref{eq:sp5} follows from the unitarity of $Q$ ($Q^\dagger Q = I$).

The minimizing vector $\hat{\bvec{x}} \in \Lambda_{\mathrm{BW}}$ in \eqref{eq:sp5} is uniquely determined by evaluating the primal decoder $D_{\mathrm{BW}}$ at the transformed input $\tilde{\bvec{\tau}} = \frac{1}{c}Q^\dagger \Phi(\symp^T \bvec{\tau})$.

To recover the optimal lattice point $\hat{\bvec{\lambda}} \in \Lambda_{\mathrm{SBW}}^\perp$, we reverse the substitutions: $\bvec{v} = c Q \hat{\bvec{x}}$, $\bvec{u} = \Phi^{-1}(\bvec{v})$, and $\hat{\bvec{\lambda}} = \symp \bvec{u}$. Concatenating these operations yields exactly the map $D_{\mathrm{SBW}^\perp}(\bvec{\tau})$ defined in \eqref{eq:decoder-map}. Since \eqref{eq:sp1}--\eqref{eq:sp5} are strictly distance-preserving, the decoding radius remains invariant.

Algorithmically, $D_{\mathrm{SBW}^\perp}$ consists of one invocation of $D_{\mathrm{BW}}$ and six $O(N)$ matrix-vector multiplications ($\symp$, $\symp^T$, $\Phi$, $\Phi^{-1}$, $Q$, $Q^\dagger$). The asymptotic complexity is entirely bounded by the $O(N \log^2 N)$ runtime of $D_{\mathrm{BW}}$.
\end{proof}

\end{document}